\newtheorem{theorem}{Theorem}
\begin{document}

\title{Hidden Attacks on Power Grid: \\Optimal Attack Strategies and Mitigation}
\author{Deepjyoti Deka,\IEEEmembership{~Student Member,~IEEE,} Ross Baldick,\IEEEmembership{~Fellow,~IEEE,} and Sriram Vishwanath,\IEEEmembership{~Senior Member,~IEEE}
\thanks{Deepjyoti Deka, Ross Baldick and Sriram Vishwanath are with the Department of Electrical and Computer Engineering, The University of Texas at Austin, Austin, TX 78712 USA (e-mail:deepjyotideka@utexas.edu; baldick@ece.utexas.edu; sriram@ece.utexas.edu)}}

\maketitle
\begin{abstract}
 Real time operation of the power grid and synchronism of its different elements require accurate estimation of its state variables. Errors in state estimation will lead to sub-optimal Optimal Power Flow (OPF) solutions and subsequent increase in the price of electricity in the market or, potentially overload and create line outages. This paper studies hidden data attacks on power systems by an adversary trying to manipulate state estimators. The adversary gains control of a few meters, and is able to introduce spurious measurements in them. The paper presents a polynomial time algorithm using min-cut calculations to determine the minimum number of measurements an adversary needs to manipulate in order to perform a hidden attack. Greedy techniques are presented to aid the system operator in identifying critical measurements for protection to prevent such hidden data attacks. Secure PMU placement against data attacks is also discussed and an algorithm for placing PMUs for this purpose is developed. The performances of the proposed algorithms are shown through simulations on IEEE test cases.
\end{abstract}

\section{Introduction}
Stable and secure power grid operation relies on accurate monitoring of the state of its different components, including line currents and bus voltages. The state vector is estimated in a power grid by using a variety of measurement units in different buses and lines operating in the grid. These measurements are also used to facilitate operation of the electricity market through locational marginal pricing \cite{LMP}. Given their importance in the operation of the grid, error free delivery of data from the distributed meters to the central controller for state estimation is a critical need of every power grid. The effect of incorrect data collection had received attention in the $2003$ North-East blackout where incorrect telemetry due to an inoperative state estimator was listed as one of its principal causes \cite{2003}. Today Supervisory Control and Data Acquisition (SCADA) systems help relay the measurements to the state estimator of the grid for use in stability analysis and OPF solvers. However the presence of distributed meters spread across the entire geographical area covered by the power grid makes the grid vulnerable to cyber-attacks aimed at introducing malicious measurements. In fact, it has been reported that cyber-hacking had previously compromised the U.S. electric grid \cite{wallstreet} and can lead to sub-optimal electricity prices \cite{price}.

We consider here a scenario where an adversary in the power grid gains control of some meters in the grid and inject malicious data which can lead to incorrect state estimation. In reality, measurements collected have random noise due to measurement errors, but state estimators are able to overcome those through the use of statistical methods like maximum likelihood criterion and weighted least-square criterion \cite{estimation}. A coordinated attack on multiple meters by an adversary can evade detection by the standard mechanism present at the estimator and go unobservable and not raise any alarm at the state estimator.

Reference \cite{hidden} studies this problem of hidden attacks on the power grid which are not detected by tests on the residual of the measurements and shows that a few measurements are enough for the adversary to produce a hidden attack. The authors of \cite{hidden} also show that the set of protected measurements needed to prevent a hidden attack is of the same size as the number of state variables in the grid. Full protection from any hidden attack is thus very expensive. There have been multiple efforts aimed at studying the construction of malicious attack vectors for the adversary. In \cite{poor}, the authors discuss the optimal attack-vector construction for the constrained adversary using $l_0$ and $l_1$ recovery methods. A constrained adversary is governed by its objective to manipulate the measurements of the minimum number of meters to produce the desired errors in estimation. Reference \cite{sou} provides an approach for the creation of the optimal attack vector based on mixed integer linear programming. Such design approaches are NP-hard in general and hence require relaxations of the problem statement and provide approximate solutions at best. In addition, previous work such as that in \cite{thomas} requires certain assumptions on states of the system which may not hold in general.

In this paper, we consider an adversary with constrained resources. Following the attack model in \cite{poor}, we define the objective of the adversary as identifying the minimum number of meters that may be manipulated in order to create a hidden attack vector using those meters. We use graph-theoretic ideas like min-cut calculations in determining this optimal attack vector. Unlike previous work, we show that our solution does not require any assumption on the structure of the grid or any relaxation of the problem statement. The complexity of the algorithm for attack vector construction is shown to be polynomial in the number of nodes (buses) and edges (lines) in the power grid. Given the size of large power grids, polynomial running time of the algorithm justifies its significance when compared with NP-hard and brute force methods used in the existing literature.

In addition, the algorithm for attack vector construction does not depend on the exact values of the measurement matrix used in state estimation, relying primarily on the adjacency matrix of the network representing the power grid. This is significant, since the adjacency matrix of grids is often already known or can be approximated by an adversary from publicly available information. Using the novel graph theoretic framework discussed in this paper that requires only the adjacency matrix of the graph representing the power grid, the adversary can, thus, generate the optimal attack vector for malicious hidden attacks on the grid using a polynomial time algorithm.

We demonstrate that the power grid is significantly vulnerable to hidden attacks from even constrained adversaries with limited information (adjacency matrix in our case). We are unaware of any existing work providing optimal solution to this hidden attack problem in polynomial time. The framework developed is easily extended to formulate attack vectors in cases when certain measurements in the system are already protected or the state vector is partially known, i.e., some of the state variables are protected. We also provide algorithms to select, in a greedy fashion, critical measurements for protection to prevent a hidden attack given the prior knowledge of the adversary's resources (maximum number of measurements it can corrupt). Present power grids have additional meters called phasor measurement units (PMUs) installed on a few buses \cite{pmu1},\cite{pmu2}. Placement of a PMU at a given bus in the power grid provides measurement of the voltage phasor at that bus as well as the current flows of all lines incident on that bus. We discuss PMU placement in power grids in the context of hidden data attacks in detail in a separate section. The problems discussed in this context include designing an optimal attack vector for a grid equipped with PMUs, and the selection of locations for placement of PMUs in the grid in order to provide increased protection to the grid.

The main results of this paper are as follows:
\begin{itemize}
\item We provide a graph theoretic formulation for the problem of constructing an adversarial optimal attack vector with the minimum number of non-zero values that results in a hidden attack on the grid. Further, we present a algorithm that results in an exact solution to the problem and prove its optimality and polynomial-time complexity.

\item We discuss different variations of the adversarial attack problem, given the knowledge of certain prevailing protected measurements and state variables within the grid. We extend this discussion to power systems with PMUs installed at a few buses.

\item We present greedy algorithms to select additional protected measurements and locations for placement of PMUs in the grid, in order to hinder an adversarial hidden attack on the grid.

\item We study the performance of the algorithms through simulations on IEEE test bus systems and compare them with other algorithms in literature, as well as with brute force techniques.
\end{itemize}

The rest of this paper is organized as follows. The next section presents a description of the system model used in estimating the state variables in a power grid and formulation of the adversarial attack problem. The novel algorithm to determine the optimal solution to this problem, which includes selection of the measurements to attack and generating the attack vector, is discussed in Section \ref{sec:adversary solution}. Construction of the attack vector in the presence of protected measurements and state variables in the system is discussed in Section \ref{sec:variations}. Algorithms to select existing measurements in the system for protection to prevent hidden attacks on the grid are provided in Section \ref{sec:GM}. Power grids with installed PMUs are discussed in Section \ref{sec:PMU} wherein we design attack vectors for systems with PMUs and also analyze the placement of additional PMUs against hidden attacks for such systems. Simulations of the proposed algorithms on test IEEE bus systems and comparisons with other algorithms and brute force methods are reported in Section \ref{sec:results}. Finally, concluding remarks and future directions of work are presented in Section \ref{sec:conclusion}.

\section{Estimation in the Power Grid and Attack Models}
\label{sec:attack}
We represent the power grid using an undirected graph $(V,E)$, where $V$ represents the set of buses and $E$ represents the set of transmission lines connecting those buses. There are two kinds of measurements in the power grid in this model: flow measurements and voltage phasor measurements. Meter on a line in $E$ measure the power flow through that line while a meter on a bus in $V$ measure the voltage phasor at that bus. In addition, a PMU can collect both these kinds of measurements (bus voltage and current flows in all lines connected to that bus). This model is sufficiently general and can include cases where some bus voltages or line measurements are measured multiple times for redundancy. Figure \ref{14bus} shows the graph representation of the IEEE 14 bus test system. It can be found at \cite{testsystem}.

\begin{figure}
\centering
\includegraphics[width=0.4\textwidth]{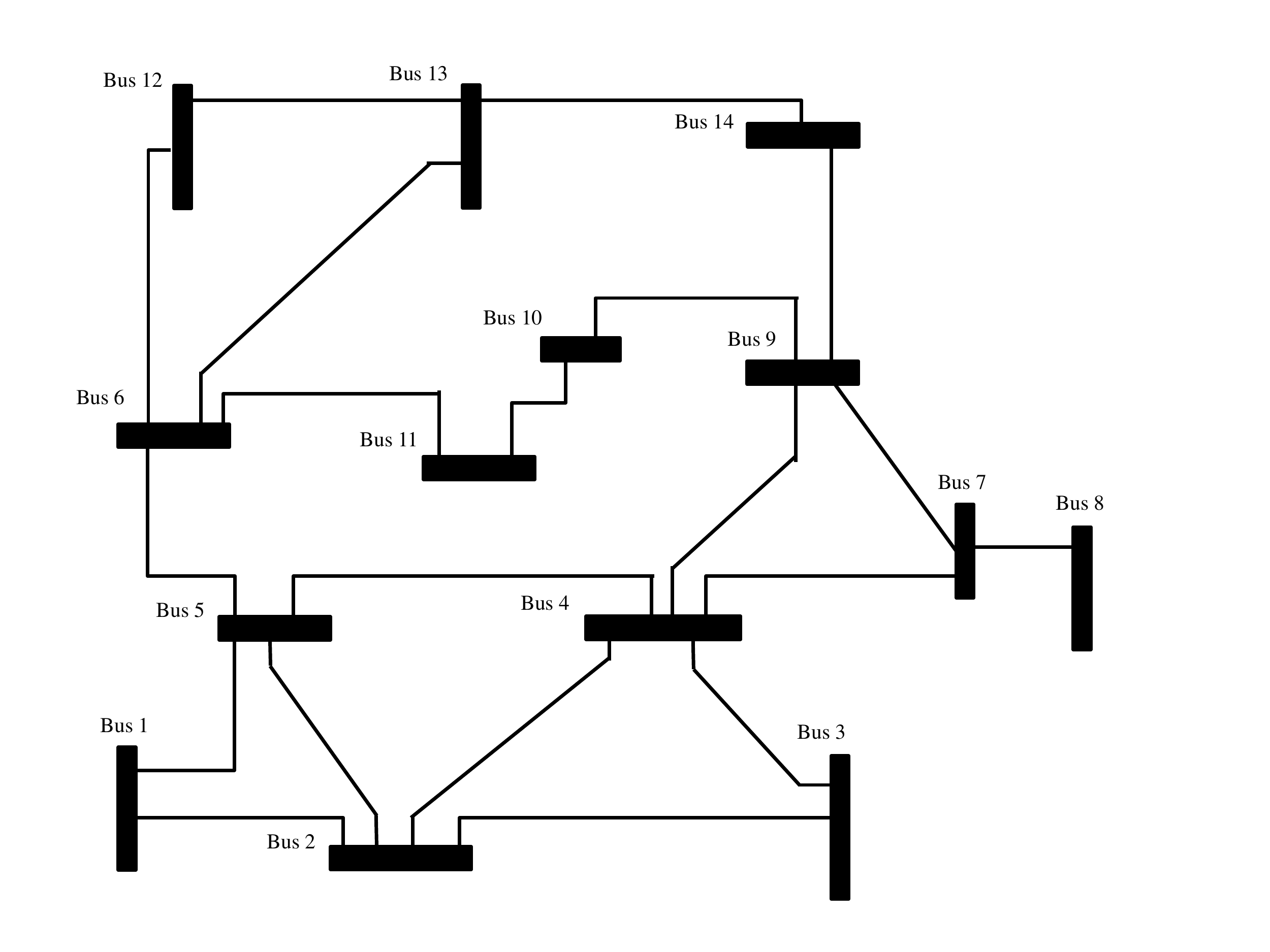}
\caption{IEEE 14-bus test system \cite{testsystem}}
\label{14bus}
\end{figure}

We consider the DC power flow model in a power grid
\begin{align}
z = Hx + e
\end{align}
where $z \in \mathbb{R}^m$ is the vector of measurements. $x \in \mathbb{R}^n$ is the state vector and consists of the voltage phase angles at the buses in the grid. The voltage magnitudes are taken as unity in the DC model. $H$ is the measurement matrix which relates the measurements with the state vector and $e$ is the zero mean Gaussian noise vector associated with the measurements. In general, $m>n$ which implies that there are more measurements than state variables to help provide redundant measurements to the state estimator. Here, $H$ depends on the topology of the network, the location of the meters as well as on the parameters of transmission lines (like resistance and susceptance). The DC power flow on a line $(i,j)\in E$ between buses $i$ and $j$ is given by $B_{ij}(x_i-x_j)$, where $B_{ij}$ is the magnitude of susceptance of the line $(i,j)$. If the $k^{th}$ measurement corresponds to this flow in line $(i,j)$, then $z_{k}$ is given by:
\begin{align}
z_{k} = H_{k}x = B_{ij}x_i-B_{ij}x_j. \label{flow}
\end{align}
The associated row in the measurement matrix ($H_k$) is a sparse vector with two non-zero values, $B_{ij}$ at the $i^{th}$ position and $-B_{ij}$ at the $j^{th}$ position.
\begin{align}
H_{k} = [0..0~~B_{ij}~~ 0..0~~-B_{ij}~~0..0] \label{line}
\end{align}
If $z_m$ measures the voltage phase angle at bus $i$ (the voltage magnitude is considered unity in the DC power flow model), the corresponding row $H_m$ in the measurement matrix is given by:
\begin{align}
&z_{m} = H_{m}x = x_i\\
\Rightarrow~&H_{m} = [0..0 ~~ 1~~ 0..0] \label{bus}
\end{align}
Here $H_m$ is a sparse vector with one in the $i^{th}$ position and zero everywhere else. The measurement matrix is, thus, very sparse with a maximum of $2$ non-zero values per row. In order to enable correct state estimation, the measurement matrix must have full column rank of $n$. The state estimator uses the measurements and outputs the estimated state vector $\hat{x}$ by minimizing the residual $\|z-H\hat{x}\|_2$. In normal operation, the magnitude of the minimum residual is smaller than an established threshold which monitors the correctness of the estimate. \\Given such an estimator, the adversary corrupts the measurement vector $z$ by adding an attack vector $a$ to generate the new measurements of the form $\hat{z} = z+a$. It is fairly straightforward (\cite{hidden}) that, if $a$ satisfies
\begin{align}
a = Hc
\end{align}
for some $c \in \mathbb{R}^n$, then the residual calculated by the estimator remains the same as before. Therefore the estimator remains incapable of detecting the presence of an attack, outputting an erroneous state vector estimate $\hat{x} +c$.

Attack vector construction refers to the creation of an optimal attack vector $a$ to corrupt the measurements. Reference \cite{hidden} creates such attack vectors by using a projection matrix $P$ using the measurement matrix $H$. A major problem of interest here is the case of a constrained adversary with limited resources. Such an adversary attacks the minimum number of measurements to create a successful hidden attack. Here, an attack is considered successful if at least $1$ state variable suffers a change in magnitude after the attack. Equivalently, the construction of the attack vector $a^*$ can be characterized as a solution of the following optimization problem :
\begin{align} \label{opt_attack}
&\min_{a} \|a\|_{0} \nonumber \\
\text{s.t. ~} & a =Hc, ~c \neq \vec{0}
\end{align}
This is similar to the attacker's problem in \cite{poor}, where the constraint $c \neq \vec{0}$ is replaced by the constraint $\|c\|_\infty \geq \tau$. Both these problem formulations are essentially the same as every solution of Problem (\ref{opt_attack}) can be suitably scaled to result in a solution obtained in \cite{poor}. For every non-zero $c$, $c\tau/\|c\|_\infty$ satisfies the constraint $\|c\|_\infty \geq \tau$.

In the next section, we present our algorithm for designing an optimal attack vector using graph theory. Unlike \cite{poor}, we do not need the exact matrix $H$ for our solution, but only the locations of $1$s in it.

\section{Optimal Attack Vector Design}
\label{sec:adversary solution}
Consider the $m$ times $n$ measurement matrix $H$ as noted in (\ref{line}) and (\ref{bus}). It is sparse and every row has either $1$ (corresponding to phase angle measurement or $2$ (corresponding to line flow measurement) non-zero elements. We first augment one extra column $h^{g}$ to the right of the matrix $H$ to create a $m$ times $(n+1)$ modified measurement matrix $\hat{H}$ such that for every row $H_m$ with a phase angle measurement, the new column $h^g$ has a value of $-1$ at the $m^{th}$ location.
\begin{align}
\hat{H}_{m} = [H_m ~|~ -1] \label{bus1}
\end{align}
For a line flow measurement $H_k$, the corresponding element in the new column is $0$.
\begin{align}
\hat{H}_{k} = [H_k ~|~ 0] \label{line1}
\end{align}
The state vector $c$ is now augmented to form a new state vector $\hat{c} = \setlength{\arraycolsep}{2pt} \renewcommand{\arraystretch}{0.8}\begin{bmatrix} c \\0 \end{bmatrix}$ which has $0$ as the final element. We now have
\begin{align}
&a= Hc = \hat{H}\hat{c} = \left[H ~|~ h^g\right]\begin{bmatrix} c \\0 \end{bmatrix}
\label{modify}
\end{align}
Equation (\ref{modify}) holds as the last element of $\hat{c}$ is $0$. The new formulation ($\hat{H},~ \hat{c}$) provides a reference phase angle of $0$ (represented by the extra element in $\hat{c}$) such that phase angle measurement at a bus becomes equivalent to a line flow measurement between the bus and reference phase angle. Note that every row in $\hat{H}$ has $2$ non-zero elements and corresponds to a line flow measurement now. Next, we state and prove a theorem which will enable us to develop the algorithm for attack vector.

\begin{theorem} \label{01}
There exists a non-zero binary $0-1$ vector $c_{opt}$ of size $n$ times $1$ for the optimal attack vector $a^*$ given by Problem (\ref{opt_attack}) such that $\|a^*\|_0 = \|Hc_{opt}\|_0$.
\end{theorem}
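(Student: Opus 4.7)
The plan is to exploit the graph-theoretic reinterpretation provided by the augmented matrix $\hat{H}$ introduced just before the theorem. Because every row of $\hat{H}$ has exactly two nonzero entries of opposite sign (corresponding to either a line flow between buses $i$ and $j$, or a phase angle measurement treated as a flow between bus $i$ and the reference node $n+1$), the $k$-th component of $\hat{H}\hat{c}$ is nonzero if and only if $\hat{c}_i \neq \hat{c}_j$. Hence $\|Hc\|_0 = \|\hat{H}\hat{c}\|_0$ equals the number of measured edges whose endpoints receive different values under $\hat{c}$. This framing makes it natural to try to replace any optimal $c^*$ by a binary partition of the nodes.

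To make this rigorous I would take an arbitrary optimizer $c^*$ of Problem (\ref{opt_attack}), form $\hat{c}^*$ by appending $0$, and, after replacing $c^*$ by $-c^*$ if necessary (which preserves both feasibility and the support of $Hc^*$), assume some entry $c^*_i$ is strictly positive. I then select a threshold $t$ in $[0,\max_i c^*_i)$ that avoids the finitely many values taken by the coordinates of $c^*$, and define $c^{(t)}$ entrywise by $c^{(t)}_i = 1$ if $c^*_i > t$ and $c^{(t)}_i = 0$ otherwise. The corresponding augmented vector $\hat{c}^{(t)}$ assigns $0$ to the reference coordinate (since $0 \not> t$), and the non-augmented $c^{(t)}$ is a $0$-$1$ vector of length $n$ that is nonzero because at least one original coordinate exceeds $t$.

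The core step is a support-containment check: if row $k$ is indexed by the edge $(i,j)$ of the augmented graph and $\hat{c}^{(t)}_i \neq \hat{c}^{(t)}_j$, then without loss of generality $\hat{c}^*_i > t \geq \hat{c}^*_j$, which forces $\hat{c}^*_i \neq \hat{c}^*_j$ and hence $[\hat{H}\hat{c}^*]_k \neq 0$. Combined with equation (\ref{modify}), this yields $\|Hc^{(t)}\|_0 \leq \|Hc^*\|_0 = \|a^*\|_0$. Feasibility of $c^{(t)}$ and optimality of $a^*$ then force equality, so $c_{opt} := c^{(t)}$ meets every requirement of the theorem.

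The main technical obstacle is handling phase-angle and line-flow rows uniformly, which is exactly what the augmentation trick resolves: once $\hat{H}$ is used, every row is a ``flow'' between two nodes of an $(n+1)$-node graph, and the level-set thresholding argument applies without case analysis. The remaining care goes into the initial sign flip (to guarantee a valid threshold $t \geq 0$ exists) and into choosing $t$ to avoid the finitely many coordinate values of $c^*$, both of which are routine once the augmented graph picture is in place.
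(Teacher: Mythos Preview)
Your argument is correct, but it takes a different (and somewhat more elaborate) route than the paper. The paper simply sets $c_{opt}(i) = \mathbf{1}(c^*(i)\neq 0)$, i.e.\ the support indicator of any optimizer $c^*$, and checks directly on $H$ (treating phase-angle rows and line-flow rows separately) that every row of $Hc_{opt}$ that is nonzero was already nonzero in $Hc^*$. Your construction instead passes to the augmented matrix $\hat{H}$, possibly flips the sign of $c^*$, and thresholds at a level $t\in[0,\max_i c^*_i)$ to obtain $c^{(t)}$; the support-containment step then becomes the single observation that $\hat{c}^{(t)}_i\neq \hat{c}^{(t)}_j$ forces $\hat{c}^*_i>t\geq \hat{c}^*_j$. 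What your approach buys is a uniform treatment of both measurement types via the augmented graph (no case split), and it is exactly the classical level-set argument one uses to round LP solutions of min-cut; what the paper's approach buys is brevity, since the support indicator needs no sign flip and no threshold selection. As a minor note, your requirement that $t$ avoid the finitely many coordinate values of $c^*$ is not actually needed for the inequality $\hat{c}^*_i>t\geq \hat{c}^*_j$ to go through, though it does no harm.
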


\begin{proof}
Consider Problem (\ref{opt_attack}). Let the optimal attack vector be given by $a^* = Hc^*$. If $c^*$ is a $0-1$ vector, take $c_{opt} =c^*$ and the theorem is trivially true. If $c$ is not a $0-1$ vector, construct $n$ times $1$ vector $c_{opt}$ such that $c_{opt}(i) = \textbf{1}(c^*(i) \neq 0)$, $\forall i \in \{1, n\}$. Consider $\|Hc_{opt}\|_0 $. For every non-zero phase angle measurement in $Hc^*$, we have a corresponding non-zero measurement in $Hc_{opt}$. However, in case of a non-zero value of line flow measurement in $Hc^*$ between two neighboring buses $i$ and $j$ with $c^*(i) \neq 0, c^*(j) \neq 0$, the corresponding value of $Hc_{opt}$ is $0$ as $c_{opt}(i) = c_{opt}(j) = 1$. It, thus, follows from the structure of the $H$ matrix that $\|Hc_{opt}\|_0 \leq \|Hc^*\|_0 $. Since $a^* = Hc^*$ is the optimal attack vector with minimum number of non-zero entries, we have $\|Hc_{opt}\|_0 =\|Hc^*\|_0$. Thus, $Hc_{opt}$ also gives an optimal attack vector and $\|a^*\|_0 = \|Hc_{opt}\|_0$.
\end{proof}

Now, consider the modified measurement matrix $\hat{H}$ and the augmented vector $\hat{c}$. Using Equation (\ref{modify}) and the previous theorem, we conclude that the optimal attack vector is given by $a^* = \hat{H}\hat{c}_{opt}$ where $\hat{c}_{opt}$ is the $0-1$ vector given by $\hat{c}_{opt} = \setlength{\arraycolsep}{2pt} \renewcommand{\arraystretch}{0.8}\begin{bmatrix} c_{opt} \\0 \end{bmatrix}$.

Minimizing the number of measurements needed by the adversary to inject the optimal attack vector is equivalent to minimizing the number of non-zero flow measurements given by $\hat{H}\hat{c}$. Since we are concerned only with $\|a\|_0$ and $\hat{c}$ is a $0-1$ vector, we observe that the exact values of susceptance present in $\hat{H}$ are not needed to get the optimal attack vector. In fact, the contribution of a flow measurement of $\hat{H}\hat{c}$ in $\|a\|_0$ will remain the same even if the susceptance $B_{ij}$ of every line included in $\hat{H}$ is changed to $1$. We, therefore, create an incidence matrix $A_H$ of dimension $m$ x $(n+1)$ from $\hat{H}$ by replacing every positive element in $\hat{H}$ with a $1$ in $A_H$ and each negative element in $\hat{H}$ with to a $-1$ in $A_H$. Zeroes are left unchanged. The $(i,j)^{th}$ elements of $\hat{H}$ and $A_H$ are related as
\begin{align}
A_H(i,j) = 1(\hat{H}(i,j) > 0) - 1(\hat{H}(i,j) < 0) \label{incident}
\end{align}
The main result of this paper which gives the minimum attack vector for the optimization Problem (\ref{opt_attack}) is given in the following theorem involving $A_H$.

\begin{theorem} \label{main}
The cardinality of the optimal attack vector in Problem (\ref{opt_attack}) with measurement matrix $H$ is equal to the min-cut of the undirected graph of $n+1$ nodes and edges defined by the incidence matrix $A_H$.
\end{theorem}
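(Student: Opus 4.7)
The plan is to convert Problem~(\ref{opt_attack}) into a global min-cut problem on the graph encoded by $A_H$. By Theorem~\ref{01}, I may restrict attention to $c$ being a $0$-$1$ vector, so the augmented vector $\hat{c}$ is a $0$-$1$ vector whose last coordinate is fixed to $0$. The constraint $c \neq \vec{0}$ forces at least one of $\hat{c}_1,\dots,\hat{c}_n$ to equal $1$. Feasible $\hat{c}$ are therefore in bijection with non-empty subsets $T \subseteq \{1,\dots,n\}$, via $T = \{i : \hat{c}_i = 1\}$, and the reference node $n+1$ always lies in $T^c$.

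Next, I would compute $a = \hat{H}\hat{c}$ entry-wise. Every row of $\hat{H}$ has exactly one positive and one negative entry of equal magnitude, corresponding to a single edge of the undirected graph defined by $A_H$ on the vertex set $\{1,\dots,n+1\}$: a line-flow measurement gives an edge between two bus-nodes $i$ and $j$, and a phase-angle measurement at bus $i$ gives an edge to the reference node $n+1$ (which is why the augmentation of Section~\ref{sec:adversary solution} was carried out in the first place). For the edge $(u,v)$ associated with the $k$-th row, $a_k$ equals a non-zero scalar multiple of $\hat{c}_u - \hat{c}_v$, which vanishes if and only if $\hat{c}_u = \hat{c}_v$. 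Because $\hat{c}$ is binary, this happens precisely when both endpoints lie on the same side of the bipartition $(T^c,\, T)$ of the vertex set. Hence $\|a\|_0$ equals the number of edges of $A_H$ crossing this cut.

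Minimizing $\|a\|_0$ over feasible $\hat{c}$ is therefore equivalent to choosing a non-trivial bipartition of the $(n+1)$-node graph $A_H$ that minimizes the number of crossing edges, which is the definition of the undirected global min-cut. Since the min-cut is symmetric under swapping the two sides of the partition, fixing the reference node $n+1$ to lie on the $T^c$ side costs no generality; every non-trivial bipartition of the vertex set can be labeled this way. Consequently the minimum over all feasible $\hat{c}$ equals the min-cut value, proving the theorem.

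The step I expect to require the most care is the sign/magnitude bookkeeping when passing from $\hat{H}$ to $A_H$. I must verify that the support pattern of $\hat{H}\hat{c}$ depends on $A_H$ alone, independent of the actual susceptance values hidden in $\hat{H}$. This works because each row contains matched $+B_{ij}$ and $-B_{ij}$ (or $+1,-1$ for phase-angle rows) entries, so $a_k = B_{ij}(\hat{c}_u - \hat{c}_v)$ vanishes, for any positive $B_{ij}$, precisely when the corresponding edge does not cross the cut, exactly as recorded by $A_H$ via~(\ref{incident}). With this identification in hand, the reduction to min-cut is immediate.
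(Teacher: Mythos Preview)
Your proposal is correct and follows essentially the same approach as the paper: invoke Theorem~\ref{01} to restrict to binary $\hat{c}$ with last entry $0$, observe that the support of $\hat{H}\hat{c}$ coincides with that of $A_H\hat{c}$ since each row has matched $\pm B_{ij}$ entries, and identify the resulting optimization as the global min-cut of the graph with incidence matrix $A_H$. The paper's own proof is terser---it simply states that the reformulated problem ``is the classical min-cut partition problem''---whereas you spell out the bijection with bipartitions and the symmetry argument for fixing the reference node, but the underlying logic is identical.
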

\begin{proof}From the discussion above, it is clear that for a given $0-1$ vector $\hat{c}$, $\|\hat{H}\hat{c}\|_0 = \|A_H\hat{c}\|_0$. Further, using Theorem \ref{01}, the optimization Problem (\ref{opt_attack}) can be written as
\begin{align} \label{opt_attack1}
&\min_{a}  \|a\|_{0}  \\
\text{s.t. ~} & a =A_H\hat{c}, ~\hat{c} \neq \vec{0} \nonumber\\
&\text{$\hat{c}$ is a $0-1$ vector with $(n+1)^{th}$ element $0$}\nonumber
\end{align}
This is the classical min-cut partition problem in graph theory. The minimum value of $\|q\|_0$ is, thus, given by the magnitude of the min-cut of the undirected graph with $A_H$ as the incidence matrix.
\end{proof}
Note that multiple measurements of the same line-flow or phase angle will lead to multiple edges between two nodes in the associated graph. Formally, after pre-processing the initial measurement matrix $H$ to generate $\hat{H}$ and $A_H$, the optimal attack vector and its cardinality is given by Algorithm $1$.

\begin{algorithm}
\caption{Optimal Attack Vector ($a^*$) through Min-Cut}
\textbf{Input:} Graph $G_H$ with incidence matrix $A_H$ \\
\begin{algorithmic}[1]
\STATE Compute the min-cut of the graph $G_H$
\STATE $c \gets \mathbf{1}$
\STATE Choose $(n+1)^{th}$ node as root
\STATE Remove min-cut edges
\STATE Do breadth first path traversal from root
\IF{ node $i$ is reached}
\STATE $c(i) \gets 0$
\ENDIF
\STATE $a^* \gets Hc $
\end{algorithmic}
\end{algorithm}

The optimal attack vector consists of the edges in the min-cut and produces a non-zero change in the estimate of the state variables at the nodes which are on the opposite side of the min-cut as the reference node. The resulting attack vector is indeed optimal as its cardinality is equal to the min-cut. The min-cut computation is a well-studied problem in graph theory and has a running time polynomial in the number of nodes and edges in the graph \cite{mincut1}. Reference \cite{mincut2} gives a simple algorithm for computing the min-cut in $O(|V|log|V| + |E|)$ time-steps. Here, $|V|$ and $|E|$ represent the number of nodes and edges in the graph considered.
The algorithm presented above, to find the optimal attack vector, has the following distinguishing characteristics which separate it from other algorithms in literature:
\begin{itemize}
\item It finds the optimal solution of the optimization Problem (\ref{opt_attack}) without using a relaxation
\item It is polynomial-time solvable
\item It does not require the exact values of the line susceptance in the grid, using instead, the locations of the measurements in the network.
\end{itemize}

Next, we show how the Algorithm $1$ can be used to design the optimal attack vector in the presence of protected measurements and state variables in the system.

\section{Optimal Attack Vector Construction with Protected Measurements and State Variables}
\label{sec:variations}
 Certain measurements in the power grid are protected from cyber-attacks by encryptions or by geographical isolation. This imposes a constraint on the adversary by requiring that the values of the attack vector $a$ corresponding to protected measurements be made $0$. Similarly, certain state variables might be protected from adversarial contamination due to the presence of secure channels of collecting their values. Let $S_m$ be the set of protected measurements and $S_v$ be the set of protected state variables. The optimal attack vector $a^*$ here can be written as the solution of the following optimization problem:
\begin{align} \label{opt_attack2}
&\min_{a} \|a\|_{0} \\
\text{s.t. ~} & a =Hc,~ c \neq \vec{0} \nonumber\\
& H^{S_m}c = 0, ~c(i) = 0 ~\forall i \in S_v\nonumber
\end{align}
where $H^{S_m}$ represents the rows in the measurement matrix corresponding to the protected measurements. Here, the adversary needs to ensure that the vector $c$ has values $0$ for the protected state variables, while the vector $a$ has values $0$ for the protected measurements.

Following Problem (\ref{opt_attack}), we consider the modified measurement matrix $\hat{H}$ (given by Equations (\ref{line}) and (\ref{bus})) and the augmented state vector $\hat{c}=\setlength{\arraycolsep}{2pt} \renewcommand{\arraystretch}{0.8}\begin{bmatrix} c \\0 \end{bmatrix}$ by adding the reference node. We first obtain the incident matrix $A_H$ from the modified measurement matrix $\hat{H}$ as per Equation (\ref{incident}). We denote the graph represented by the incident matrix $A_H$ as $G_H$. Every measurement in $A_H$ leads to an edge in $G_H$ of unit weight. The additional constraints due to the protected measurements and state variables are incluuded in the graph $G_H$ through the following modification as follows:

\noindent 1. Create an edge of infinite weight between the buses with protected state variables in $S_v$ and the reference node.

\noindent 2. Change the weights of edges with protected measurements in $S_m$ to infinity.

The resultant graph generated from $G_H$ after this modification is denoted by $G_H^*$.
We, now, run the steps outlined in Algorithm $1$ on $G_H^*$ to obtain the optimal attack vector as described in the previous section. We call this Algorithm $2$ for completion.
\begin{algorithm}
\caption{Optimal Attack Vector ($a^*$) with protected measurements and state variables}
\textbf{Input:} Graph $G_H$, protected state variables $S_v$ and measurements $S_m$  \\
\begin{algorithmic}[1]
\STATE Modify $G_H$ to generate $G_H^*$
\STATE Run Algorithm $1$ on $G_H^*$
\end{algorithmic}
\end{algorithm}

In the solution of Algorithm $2$, the modified edges (with infinite weight) are not included in the attack vector given by the min-cut to keep the value of the min-cut below infinity. This ensures that the modification of $G_H$ to $G_H^*$ satisfies the constraints arising due to protection and gives the optimal solution.\\
\noindent\textbf{$l_1$ Relaxation:} Problem (\ref{opt_attack2}) can also be relaxed and approximately solved using a na\"{i}ve $l_1$ relaxation by replacing the non-convex $l_0$ terms with $l_1$ terms \cite{sparsity}. However, such an approach leads to attack vector solutions with large cardinality which are sub-optimal. To go around that, we use thresholds in the formulation shown below to solve Problem \ref{opt_attack2}:
\begin{align} \label{opt_attack2modify}
&\min_{a} \|a\|_{1}  \\
\text{s.t. ~} & a =Hc,~ c \geq \vec{0},~ 1^T c > \theta_1\nonumber \\
& H^{S_m}c = 0, ~ c(i) = 0 ~\forall i \in S_v\nonumber
\end{align}
The final attack vector is obtained by thresholding the optimal solution $a^*(i) = 1(a^*(i)> \theta_2), \forall ~1 \leq i \leq m$, where $m$ is the length of the attack vector. Here, $\theta_1$ and $\theta_2$ are thresholds used to decrease the cardinality of the solution attack vector. $\theta_1$ and $\theta_2$ are taken as $1$ and $10^{-3}$ respectively in our simulations in Section \ref{sec:results}.

Until now, we have discussed the adversary's strategy for designing attack vectors towards causing hidden data attacks on the grid. We will use this knowledge in the next section to discuss policies that can be adopted by the power grid system operator or controller to restrict the efficacy of hidden attacks.

\section{Protection Strategies against Hidden Attacks}
\label{sec:GM}
Let us consider the system described in Problem (\ref{opt_attack2}). Here, there are pre-existing secure measurements (set $S_m$) and state variables (set $S_v$) in the grid. The corresponding set of unprotected measurements and unprotected state variables will be termed $S_m^c$ and $S_v^c$ respectively. For complete protection against hidden attacks, it has been shown in \cite{hidden} that $H^{S_m}$ should have full column rank. In that case, the only $c$ satisfying the constraint $H^{S_m}c = 0$ is the all-zero vector. However, for full column rank of $n$, the number of protected measurements needs to be greater than $n$ \cite{hidden}, and incurs a great cost. Instead, we look at the problem of augmenting the set of protected measurements $S_m$ with $k$ measurements selected from the unprotected set $S_m^c$. Protecting additional measurements leads to an increase in the cardinality of the optimal attack vector $\|a^*\|_0$. This increases the number of compromised measurements needed by the adversary for a successful attack. We formulate this problem as follows:
\begin{align} \label{opt_attack3}
\max_{S^* \in S_m^c}&\min_{a} \|a\|_{0}  \\
\text{s.t. ~} & a =Hc,~ c \neq \vec{0} \nonumber \\
& H^{S_m}c =0,~ H^{S^*}c = 0 \nonumber\\
& c(i) = 0 ~\forall ~i \in S_v,~ |S^*| = k \nonumber
\end{align}
The set of new protections $S^*$ of cardinality $k$ is then used to update the protected set $S_m$. As mentioned earlier, $S_m$, $S_v$, $S_m^c$ and $S_v^c$ represent the sets of protected measurements, protected state variables, unprotected measurements and unprotected state variables in the grid respectively.

Protecting optimal $k$ additional measurements is equivalent to increasing the weights of $k$ edges in the modified graph $G_H^*$ (outlined in Section \ref{sec:variations}) to infinity to maximally increase the value of the min-cut. This is a NP-hard problem. A brute- force selection of measurements for protection is computationally intensive and impractical given the large number of candidate measurements in the set $S_m^c$ in a real power grid. Hence, we provide here a greedy approach for Problem \ref{opt_attack3} in Algorithm $3$. Here, $S_m$ is updated in $k$ steps. At each step, the best candidate is chosen in a greedy fashion for protection given $a^*$, the current optimal attack vector. After including a measurement in the protected set $S_m$, $a^*$ is updated and used for selecting the next candidate measurement for protection.

\begin{algorithm}
\caption{Greedy Solution for Additional Protection}
\textbf{Input:} Graph $G_H^*$, attack vector $a^*$, protected set $S_m$ \\
\textbf{Output:} Updated $G_H^*$, $a^*$ and $S_m$\\
\begin{algorithmic}[1]
\FOR {$i = 1$ \TO $k$}
\STATE $a_{cm} \gets a^*$
\FOR {$j=1$ \TO $m$ \COMMENT{$m$: total measurements }}
\IF {$a^*(j) \neq 0$} \label{red}
\STATE $G_{temp}\gets G_H^*$
\STATE Protect measurement $j$ in $G_{temp}$
\STATE Compute optimal attack vector $a_{temp}$ for $G_{temp}$
\IF {$\|a_{temp}\|_0 \geq \|a_{cm}\|_0$}
\STATE $cm \gets j$    \COMMENT{current best candidate}
\STATE $a_{cm} \gets a_{temp}$ \COMMENT{current optimal attack vector}
\ENDIF
\ENDIF
\ENDFOR
\STATE Protect measurement $cm$ and update $G_H^*$
\STATE $a^* \gets a_{cm}$,~  $S_m \gets S_m \cup \{cm\}$
\ENDFOR
\end{algorithmic}
\end{algorithm}

Step \ref{red} of Algorithm $3$ retains only the measurements represented by the current min-cut of $G_H^*$ as candidates for the next update in $S_m$. It ignores measurements outside the current min-cut as protecting them does not lead to an increase in the size of the min-cut of the updated graph. This step, thus, leads to an reduction in the number of possible candidates in each step from $m - |S_m|$ to $\|a\|_0$ without any loss of performance. The Algorithm is of course sub-optimal compared to a computationally intensive brute force search of the best measurements for protection.

\section{Power Systems with PMUs: Attacks and Protection}
\label{sec:PMU}
 In this Section, we extend the ideas developed in the previous sections to power grids with Phasor Measurement Units (PMUs). A PMU located at a bus in the grid measures its voltage phasor as well as the current flows of all lines incident on that bus \cite{PMUwork}. Previous work on PMU placement against hidden measurement attacks \cite{poor}, \cite{thomas} assume full protection of the PMU's measurements. Recently, it has been shown that PMUs do not have full security as they rely on civilian GPS signals for real-time signalling that can thus be corrupted by GPS spoofers \cite{todd}. Therefore, we consider both protected and unprotected PMU measurements here.
\subsection {Optimal Attack Vector for Grids with PMUs}
The bus phase angles and line flow measurements calculated by the unsecured PMUs are equivalent to other measurements in the grid. We assume here that any measurement in an unsecured PMU can be independently corrupted by an adversary. For secure PMUs, we consider all measurements recorded by them as protected and include them in the protected set $S_m$ and the protected bus phase angles in the set of protected state variables $S_v$. The attack vector is then given by running Algorithm $2$. The optimality of the attack vector follows from the discussion for a general grid with protected measurements given in Sections \ref{sec:adversary solution} and \ref{sec:variations}.
\subsection {Protection against hidden attack by placing secure PMUs}
In this case, we consider secure PMUs such that their measurements are protected against any malicious attack. Each PMU placed at a bus thus creates protected measurements of the bus phase angle and incident line flows on that bus. Optimal Placement of secure PMUs to ensure full protection of all state variables against any adversary is equivalent to a set-cover problem and is NP-hard in general. However, approximate and distributed algorithms based on belief propagation have been shown to provide optimal PMU placement for several IEEE test systems \cite{deka2011}. Here, instead of full protection, we look at the problem of placing additional $k$ secure PMUs to maximally hinder a hidden attack by the adversary. We consider existing protected measurements and protected state variables in the grid and denote them by $S_m$ and $S_v$ respectively. It is worth noting that $k$ PMUs might not be sufficient to provide full protection to the entire state vector. Thus, we look at maximizing the cardinality of the optimal attack vector $a^*$ of the adversary instead. As discussed in Section \ref{sec:GM}, this is done by maximizing the min-cut of the modified graph $G_H^*$ associated with the measurement matrix $H$ and protected sets $S_m$ and $S_v$. We modify Algorithm $3$ and provide a greedy algorithm, Algorithm $4$, to determine $k$ bus locations, one at a time for placing secure PMUs. This greedy algorithm runs $k$ times and thus has a small complexity compared to a brute force search. The performance of the algorithm on IEEE test cases is reported in the following section.

\begin{algorithm}
\caption{Greedy Solution for $k$ secure PMU placement}
\textbf{Input:} Graph $G_H^*$, attack vector $a^*$, protected set $S_m$ \\
\textbf{Output:} Updated $G_H^*$, $a^*$\\
\begin{algorithmic}[1]
\FOR {$i = 1$ \TO $k$}
\STATE $a_{cm} \gets a^*$
\FOR {$j=1$ \TO $n$ \COMMENT{$m$: total buses}}
\STATE $G_{temp}\gets G_H^*$
\STATE Place PMU at bus $j$ in $G_{temp}$
\STATE Compute optimal attack vector $a_{temp}$ for $G_{temp}$
\IF {$\|a_{temp}\|_0 \geq \|a_{cm}\|_0$}
\STATE $cm \gets j$    \COMMENT{current best candidate bus}
\STATE $a_{cm} \gets a_{temp}$ \COMMENT{current optimal attack vector}
\ENDIF
\ENDFOR
\STATE Place PMU at bus $cm$ and update $G_H^*$
\STATE $a^* \gets a_{cm}$
\ENDFOR
\end{algorithmic}
\end{algorithm}

\section{Simulations on IEEE test systems}
\label{sec:results}
In this section, we evaluate the performance of our proposed algorithms by simulating their performance on different IEEE test bus systems, namely 14-bus, 30-bus and 118-bus systems. Data about these test systems can be found at \cite{testsystem}. All simulations are run in Matlab Version 2009a. We start by discussing the performance of Algorithms $1$ and $2$ that are used for constructing the optimal attack vector ($a^*$) of the adversary. We take IEEE-14 bus system and place flow measurements in all lines and voltage measurements on random $60\%$ of the buses. Figure \ref{attack14} shows the increase in the average size of the optimal attack vector with increase in the fraction of randomly protected measurements placed in the system. We see that plots generated by simulations of our algorithms and that obtained through brute force search for the optimal attack vector overlap completely, depicting optimal performance. It can be seen from the same figure that the performance of our algorithms are much better than the output of the $l_1$ relaxation given in Problem (\ref{opt_attack2modify}). Next, we plot the output of Algorithm $2$ and show its improved performance over the output of a $l_1$ relaxation approach for 30, 57 and 118 IEEE test bus systems under different system conditions in Figure \ref{attackall}. The output of $l_1$ relaxation is not close to optimal as commendable performance of $l_0-l_1$ solvers requires the measurement matrix to satisfy certain necessary conditions \cite{sparsity}. Such conditions are difficult to satisfy for test-systems where the network and the measurement matrices are not random, as in our case.

We now present results on our approach to Problem (\ref{opt_attack3}), which is given by Algorithm $3$. Here, we select, in a greedy fashion, the $k$ best measurements for protection such that the minimum number of measurements needed for a successful hidden attack increases the most. Figure \ref{measprot14} shows the performance of our greedy Algorithm $3$ for different values of $k$ for the IEEE-14 bus system with flow measurements on all lines, voltage measurements on $60\%$ of the buses and $1/6$ of measurements initially protected. We observe that the performance of the greedy algorithm in increasing the size of the optimal attack vector is comparable to a computationally intensive brute-force selection for protecting additional measurements in this case. We also simulate Algorithm $3$ for IEEE 30, 57 and 118 bus systems and plot the average improvement in the cardinality of the optimal attack vector with an increase in the value of $k$ in Figure \ref{measprotgreedy}. It is important to note that the minimum cardinality of the optimal attack vector $a^*$ does not increase significantly with a small increase in $k$ or fraction of protected measurements in all the different test systems considered. This observation can be explained using the fact that the test-systems are sparse and have several buses with low degree and thus have a low min-cut.
Determination of optimal attack vectors in the presence of secure PMUs for the IEEE 30 and 57 bus systems is shown in Figure \ref{attackpmu}. In either test system, we place line flow measurements in each bus and phase angle measurements in $60\%$ of the buses. We observe an expected increase in the average size of the optimal attack vector on increasing the fraction of buses randomly selected for placement of secure PMUs. Finally, we show the performance of Algorithm $4$ in the IEEE 30-bus system. In the base case, we put line flow measurements in each bus of the system, phase angle measurement in random $60\%$ of the buses and protect $1/10$ of the measurements selected randomly. Algorithm $4$ is used to place $k$ additional PMUs on buses to increase the cardinality of the optimal attack vector for the system. We observe again that enough secure PMUs need to be placed in the grid to significantly increase the cardinality of the optimal attack vector as high sparsity of the network graph and low degrees of the nodes keep the graph min-cut low.

\begin{figure}[ht]
\centering
\includegraphics[width=0.4\textwidth]{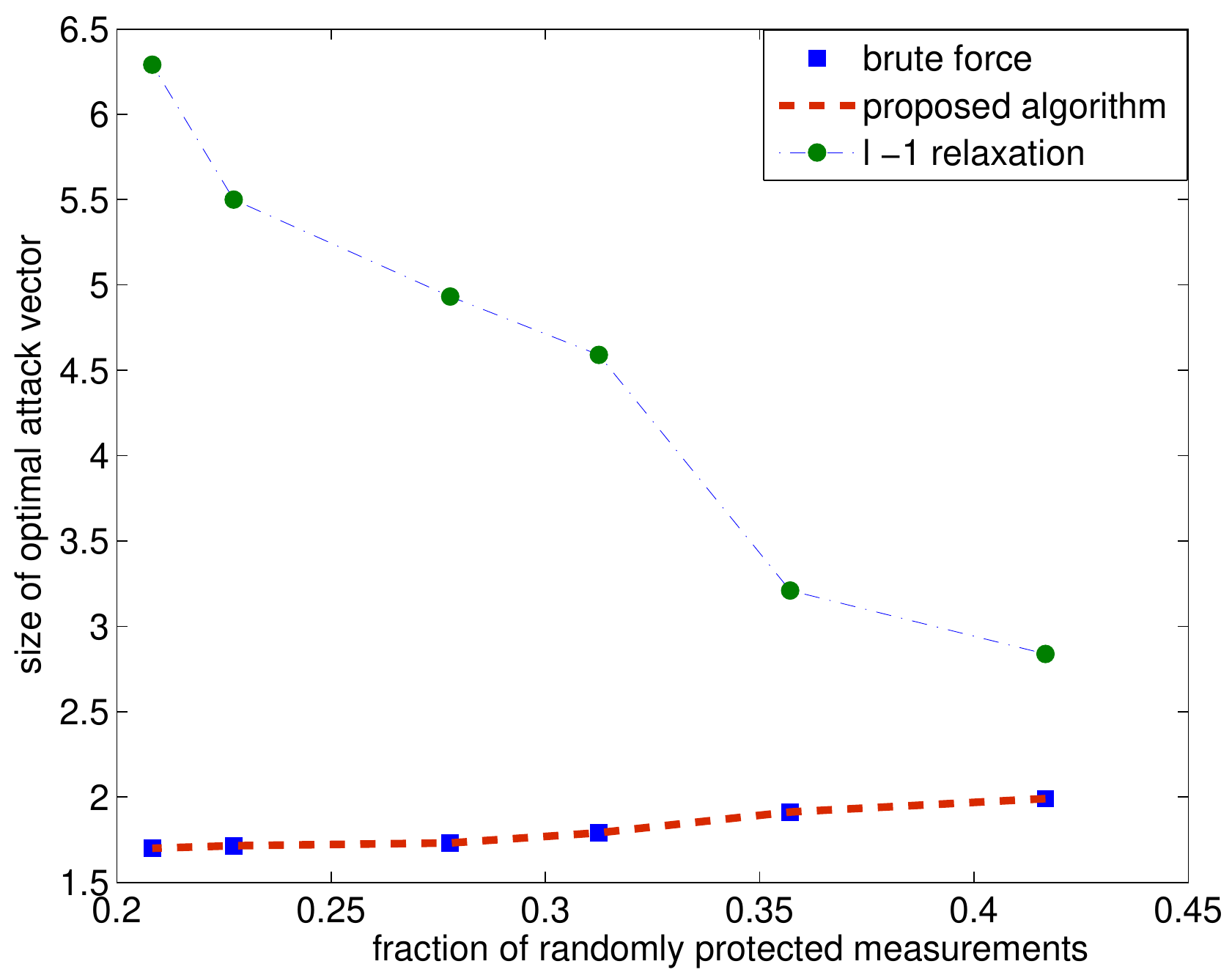}
\caption{Optimal hidden attack on IEEE 14-bus system with flow measurements on all lines, voltage measurements on $60\%$ of the buses and fraction of measurements randomly protected}
\label{attack14}
\end{figure}
\begin{figure}
\centering
\includegraphics[width=0.4\textwidth]{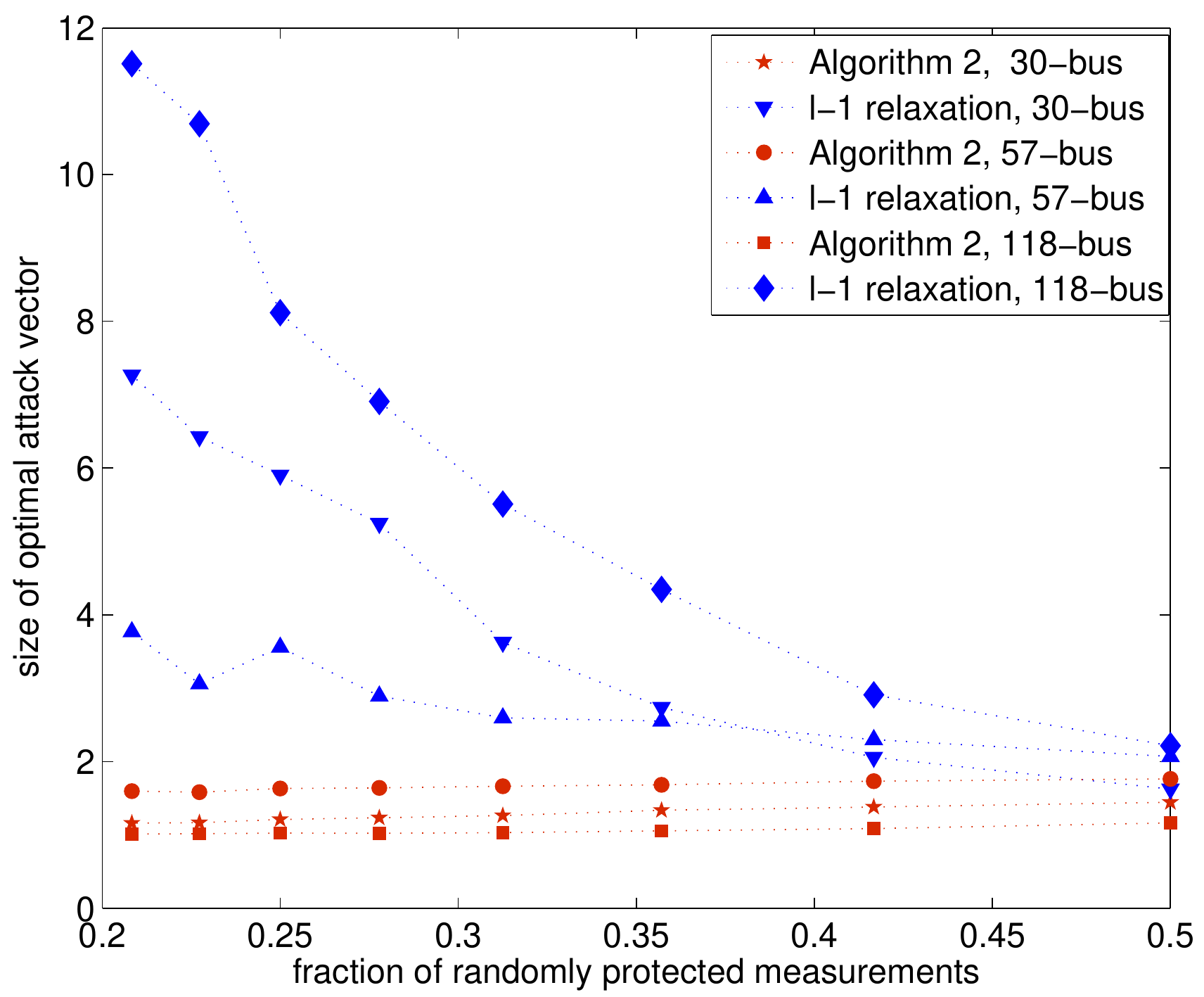}
\caption{Optimal hidden attack on IEEE test systems with flow measurements on all lines, voltage measurements on $60\%$ of the buses and fraction of measurements randomly protected}
\label{attackall}
\end{figure}
\begin{figure}
\centering
\includegraphics[width=0.4\textwidth]{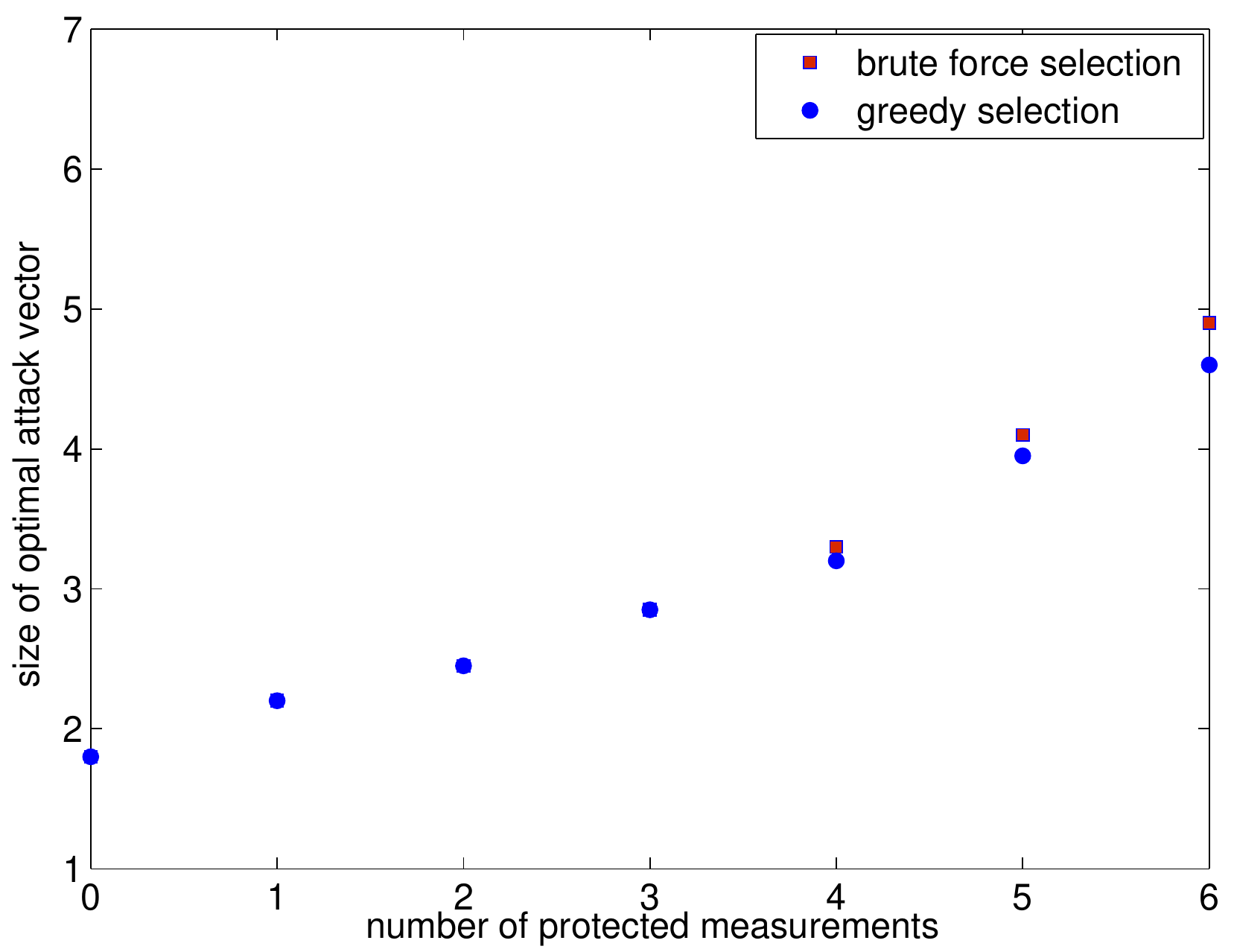}
\caption{Protection of additional measurements in IEEE 14-bus system with flow measurements on all lines, voltage measurements on $60\%$ of the buses and $1/6\%$ of measurements initially protected}
\label{measprot14}
\end{figure}
\begin{figure}
\centering
\includegraphics[width=0.4\textwidth]{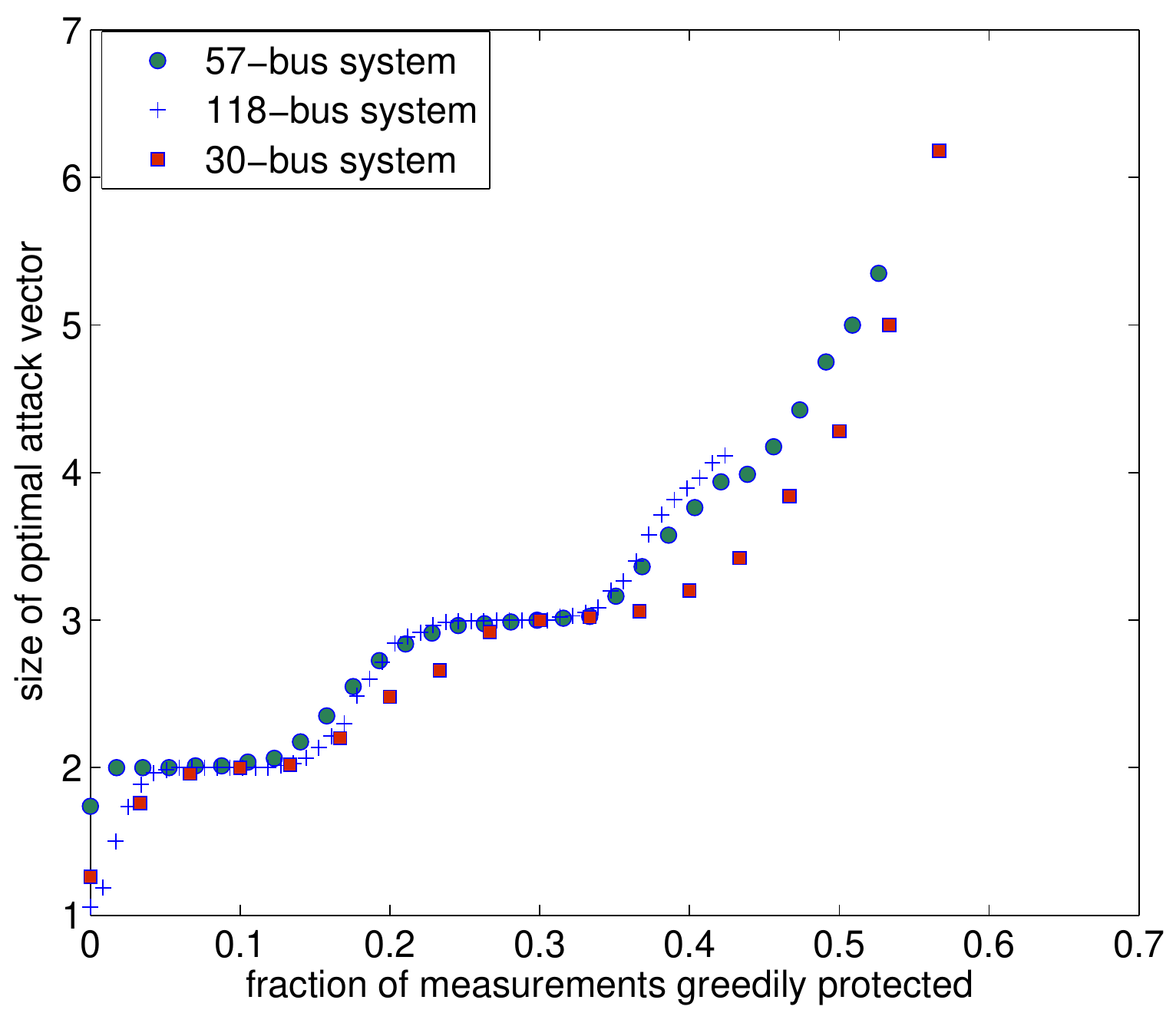}
\caption{Greedy protection of additional measurements in IEEE test systems with flow measurements on all lines, voltage measurements on $60\%$ of the buses and $1/6$ of measurements initially protected}
\label{measprotgreedy}
\end{figure}
\begin{figure}
\centering
\includegraphics[width=0.4\textwidth]{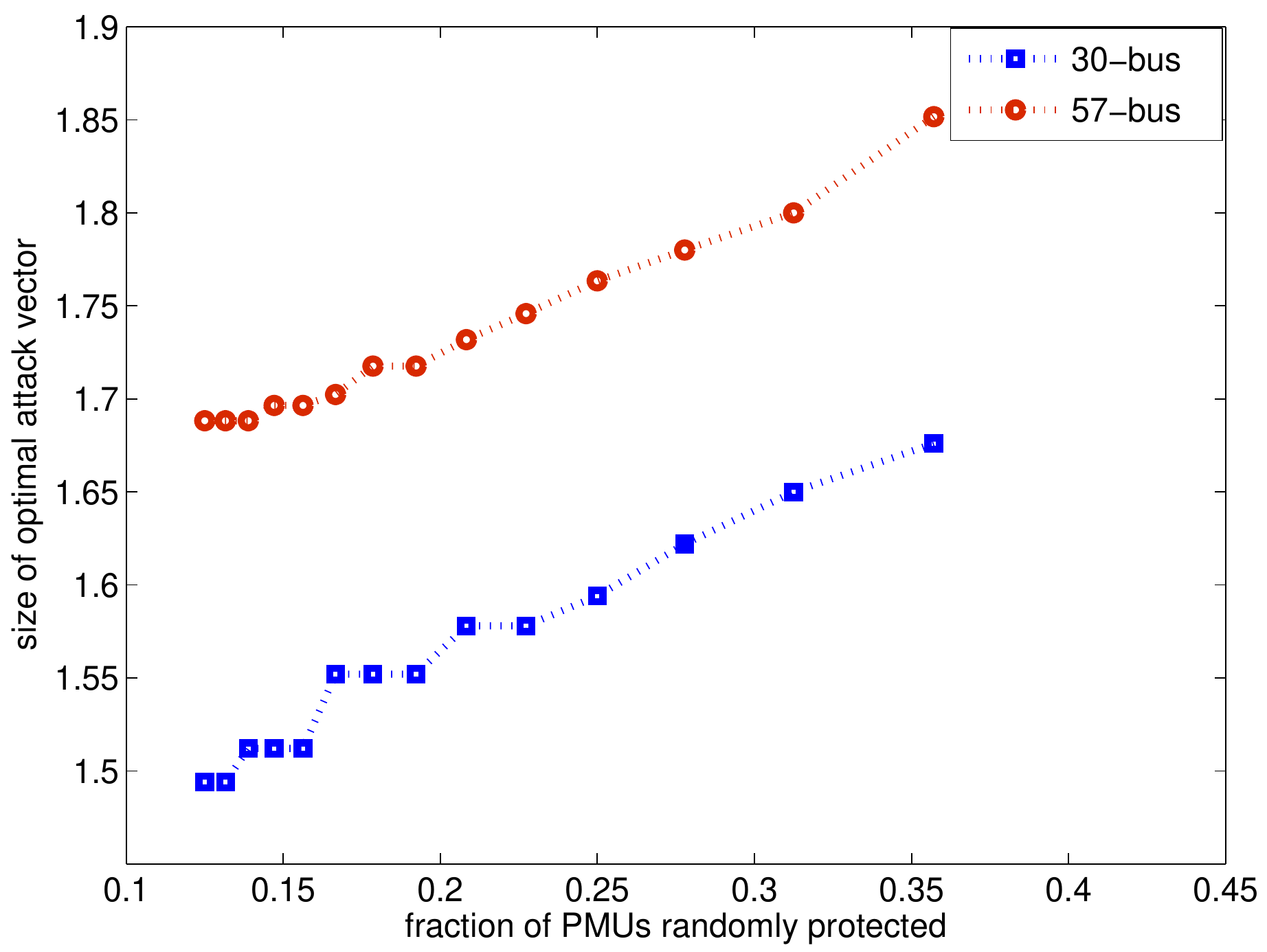}
\caption{Optimal hidden attack on IEEE test systems with flow measurements on all lines, voltage measurements on $60\%$ of the buses and PMUs placed randomly on fraction of buses}
\label{attackpmu}
\end{figure}
\begin{figure}
\centering
\includegraphics[width=0.4\textwidth]{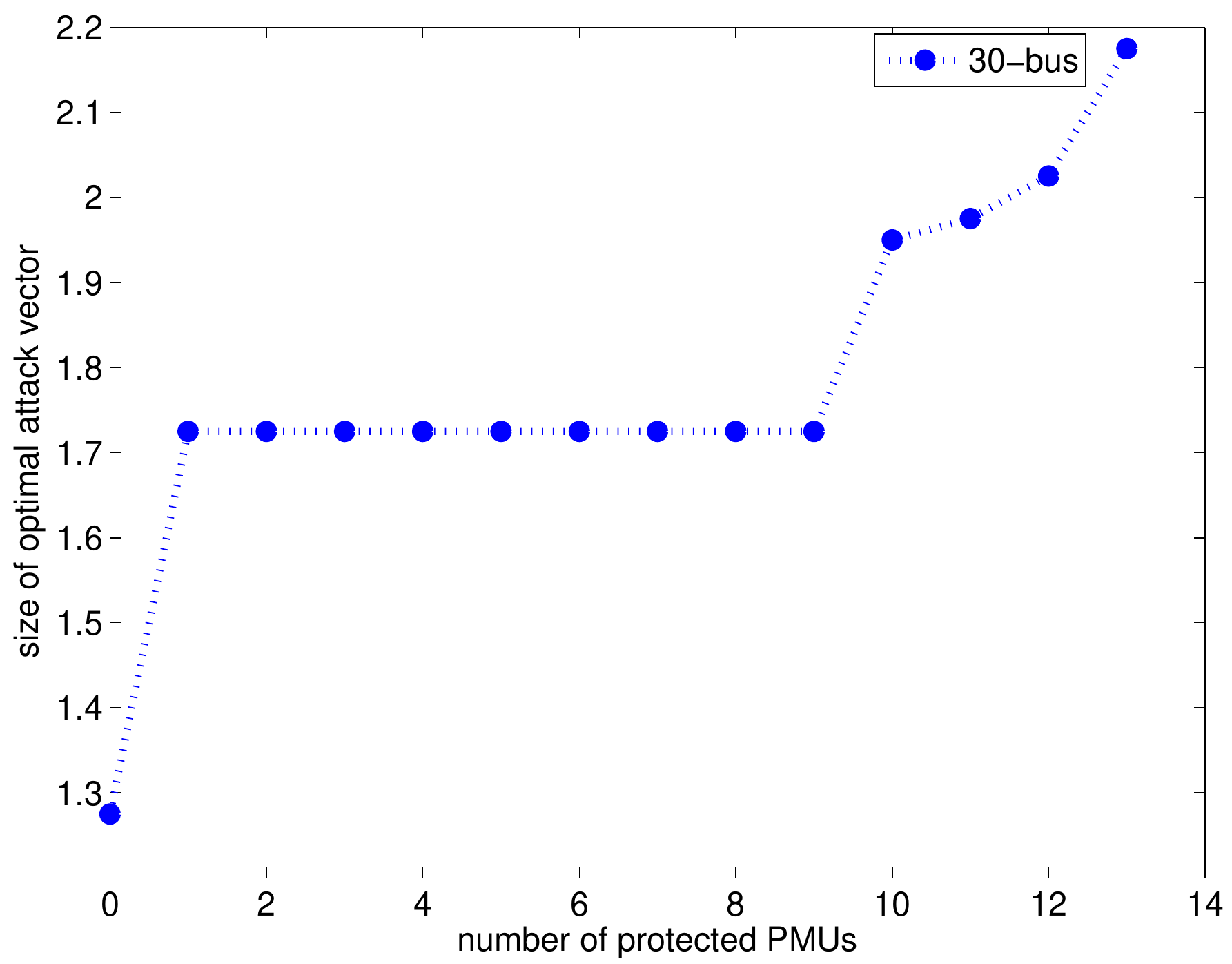}
\caption{Greedy placement of additional PMUs in IEEE 30-bus system with flow measurements on all lines, voltage measurements on $60\%$ of the buses and $1/10$ of measurements initially protected}
\label{PMUprot30}
\end{figure}

\section{Conclusion}
\label{sec:conclusion}
In this paper, we study an adversarial problem of causing errors in estimation of state variables in a power grid through injection of suitably hidden measurement errors. We formulate this problem in terms of controlling and manipulating a minimum set of measurements in order to affect a successful hidden attack as an $l_0$ optimization problem. We introduce a novel graph-theoretic approach to designing the optimal attack vector using min-cuts. The proposed algorithm has polynomial time complexity and is shown to result in an optimal output given a configuration of the power grid. We show that our algorithm gives the optimal output even when a fraction of the measurements have existing protection and performs much better than a $l_1$ relaxation of the problem.
From the system operator's perspective, we develop an algorithm to identify measurements in the system that provide additional protection, aimed at preventing and/or reducing the efficacy of hidden attacks by an adversary. Although sub-optimal, the low complexity algorithm can be used to protect measurements to increase the set of measurements that the adversary must control in order to cause a successful hidden attack on the system. Further, we extend the discussion on hidden attacks in the grid to systems with PMUs and discuss design of optimal attack vector for a system with PMUs and placement of additional secure PMUs in the system to prevent such attacks. The advantage of using low complexity algorithms to provide security against hidden attacks is immense for large power grids with several thousand buses and lines. This work can be extended to include other hidden attacks where the adversary is not limited by number of attacked meters but other resources. Another extension includes determining the minimum set of key measurements for protection by the system operator given the knowledge of the adversary's maximum capacity to attack the power grid. This is the focus of our current work.

\end{document}